\title{\LARGE \bf
Safe and Efficient Model Predictive Control Using Neural Networks: An Interior Point Approach
}
\author{Daniel Tabas and Baosen Zhang% <-this % stops a space
\thanks{This work is partially supported by the National Science Foundation Graduate Research Fellowship Program under Grant No. DGE-1762114, NSF grants ECCS-1930605 and ECCS-2023531. Any opinions, findings, conclusions, or recommendations expressed in this material are those of the authors and do not necessarily reflect the views of the National Science Foundation.}% <-this % stops a space
\thanks{Authors are with the department of Electrical and Computer Engineering, University of Washington, Seattle, WA, United States. $\{$dtabas, zhangbao$\}$@uw.edu.
}}
\begin{document}

\newcommand{\N}{\mathbb{N}}
\newcommand{\Z}{\mathbb{Z}}
\newcommand{\R}{\mathbb{R}}
\newcommand{\any}{\text{ $\forall$ }}
\newcommand{\e}{\text{e}}
\newcommand{\E}{\mathcal{E}}
\newcommand\m[1]{\begin{bmatrix}#1\end{bmatrix}}

\newcommand{\X}{\mathcal{X}}
\newcommand{\U}{\mathcal{U}}
\newcommand{\D}{\mathcal{D}}
\renewcommand{\int}{\textbf{int }}
\newcommand{\B}{\mathbb{B}}
\newcommand{\F}{\mathcal{F}}
\newcommand{\T}{\mathcal{T}}
\newcommand{\V}{\mathcal{V}}
\renewcommand{\P}{\mathcal{P}}
\newcommand{\Q}{\mathcal{Q}}
\newcommand{\Y}{\mathcal{Y}}
\renewcommand{\S}{\mathcal{S}}
\newcommand{\M}{\mathcal{M}}

\renewcommand{\u}{\textbf{u}}
\newcommand{\x}{\textbf{x}}

\newcommand{\todo}[1]{\textcolor{red}{#1}}

\newtheorem{proposition}{Proposition}
\newtheorem{definition}{Definition}

\newcommand{\revision}[1]{\textcolor{black}{#1}}

\newcommand{\newrevision}[1]{\textcolor{black}{#1}}

\maketitle
\thispagestyle{empty}
\pagestyle{empty}

%%%%%%%%%%%%%%%%%%%%%%%%%%%%%%%%%%%%%%%%%%%%%%%%%%%%%%%%%%%%%%%%%%%%%%%%%%%%%%%%
\begin{abstract}

Model predictive control (MPC) provides a useful means for controlling systems with constraints, but suffers from the computational burden of repeatedly solving an optimization problem in real time. Offline (explicit) solutions for MPC attempt to alleviate real time computational challenges using either multiparametric programming or machine learning. The multiparametric approaches are typically applied to linear or quadratic MPC problems, while learning-based approaches can be more flexible and are less memory-intensive. Existing learning-based approaches offer significant speedups, but the challenge becomes ensuring constraint satisfaction while maintaining good performance. In this paper, we provide a neural network parameterization of MPC policies that explicitly encodes the constraints of the problem. By exploring the interior of the MPC feasible set in an unsupervised learning paradigm, the neural network finds better policies faster than projection-based methods and exhibits substantially shorter solve times. We use the proposed policy to solve a robust MPC problem, and demonstrate the performance and computational gains on a standard test system. 

\end{abstract}

%%%%%%%%%%%%%%%%%%%%%%%%%%%%%%%%%%%%%%%%%%%%%%%%%%%%%%%%%%%%%%%%%%%%%%%%%%%%%%%%

\section{Introduction}

Model predictive control (MPC) \cite{Rawlings2019} is a powerful technique for controlling systems that are subject to state and input constraints, such as agricultural \cite{Ding2018}, automotive \cite{Hrovat2012}, 
%chemical process \cite{Eaton1992}, 
and energy systems \cite{Ademola-Idowu2021}. However, many applications require fast decision-making which may preclude the possibility of repeatedly solving an optimization problem online \cite{Alessio2009}.
% Strategies for accelerating MPC computations can be classified as offline \cite{Alessio2009}, online \cite{Patrinos2014, Wang2010}, or hybrid \cite{Zeilinger2011}.

A popular approach for accelerating MPC is to move as much computation offline as possible~\cite{Alessio2009,Zeilinger2011}. These techniques, known as explicit MPC, involve precomputing the solution to the MPC problem over a range of parameters or initial conditions. Most of the research effort has focused on problems with linear dynamics and constraints, and linear or quadratic cost functions. In this case,  the explicit MPC solution is a piecewise affine (PWA) function defined over a polyhedral partition of the state constraints. However, many of the applications of interest have cost functions that are not necessarily linear or quadratic, or even convex. Further, the memory required to store the partition and affine functions can be prohibitive even for modestly-sized problems. %\todo{The original version make it reads like we only care about linear or quadratic costs, which is not true and make our contribution sound less impressive. I tried rewriting it, see if it works.}

% If the dynamics and constraints of the system are linear and the cost function is linear or quadratic, the explicit MPC solution is a piecewise affine (PWA) function defined over a polyhedral partition of the state constraints \cite{Alessio2009}. The control law is implemented efficiently online by searching over the partition, with the main drawback being the amount of memory required to store the partition and affine functions even for modestly-sized problems. In \cite{Kvasnica2015}, the authors provide a region-free approach for problems with strictly convex quadratic costs only. 

In order to reduce the complexity of explicit MPC, the optimal offline solution can be approximated. Approximations generally fall into two categories: partition-based solutions  \cite{Jones2007,Johansen2004,Grancharova2009} that generate piecewise control laws over coarser state space partitions, and learning-based solutions \cite{Akesson2006,Chen2018d,Parisini1995,Domahidi2011} that use function approximation to compactly represent the optimal MPC policy. In this paper, we focus on the latter with the key contribution of ensuring constraint satisfaction while exploring all feasible policies.

%\todo{Make the need of constraint satisfaction more explicit at the start of the paragraph.}
Constraint satisfaction is crucial in many engineering applications, and the ability of MPC to enforce constraints is a major factor in its popularity. However, it is not straightforward to guarantee that a learning-based solution will satisfy constraints. The main challenge arises from the fact that while neural networks can limit their outputs to be in simple regions, there is no obvious way of forcing complex constraint satisfaction at the output. In~\cite{Akesson2006,Maddalena2020}, supervised and unsupervised learning were used to approximate the solution of MPCs, but did not provide any feasibility guarantees. By contrast, \cite{Chen2018d} trains an NN using a policy gradient approach, and guarantees feasibility by projecting the NN output into the feasible action set. However, this extra optimization step slows down the speed of online implementation, making it difficult to use in applications that require high-frequency solutions~\cite{Zheng2020}. Supervised learning approaches that provide safety guarantees \cite{Domahidi2011, Parisini1995} rely on a choice of MPC oracle that is not obvious when persistent disturbances are present.
% Learning-based approximations to explicit MPC date back to the mid 1990s and many solutions exist. 
% The methods employed in e.g. \cite{Parisini1995, Domahidi2011,Maddalena2020,Hertneck2018} use supervised learning by fitting the relationship between the current state and the optimal control input with a neural network (NN) or kernel machine. However, querying the optimal control law can be time-consuming, and it is not obvious how to find a good oracle in systems with persistent disturbances. 

%\todo{Good to mention Gauge or Gauge map somewhere in the next two paragraphs.}
In this paper, we propose an NN architecture for approximating explicit solutions to finite-horizon MPC problems with linear dynamics, linear constraints, and arbitrary differentiable cost functions. The proposed architecture guarantees constraint satisfaction without relying on projections or MPC oracles. By exploring the \emph{interior} of the feasible set, we demonstrate faster training and evaluation, and comparable closed-loop performance relative to other NN architectures. %Exploring the interior of the feasible set also guarantees feasibility at any level of accuracy, removing the need for fidelity to an oracle. 
%Finally, we show that with the proposed architecture, the unsupervised approach is competitive and can even outperform online MPC solutions in closed-loop experiments with persistent disturbances, in terms of both performance and computation time.

The proposed approach has parallels in interior point methods for convex optimization \cite{Boyd2009}. Interior point methods first solve a \emph{Phase I} problem to find a strictly feasible starting point. This solution is used to initialize the \emph{Phase II} algorithm for optimizing the original problem. Our approach accelerates both phases. The Phase I solution is given by a simple function (e.g., affine map) and the Phase II problem is solved using an NN architecture that can encode arbitrary polytopic constraints (Fig. \ref{fig:intro}).

The Phase II solution builds on a technique first proposed in \cite{Tabas2021a}, which uses a \emph{gauge map} to establish equivalence between compact, convex sets. \revision{With respect to \cite{Tabas2021a}, the current work has three novel aspects. First, the reinforcement learning (RL) algorithm in \cite{Tabas2021a} only uses information about the constraints, and does not use information about the cost function or dynamics. The resulting policy is safe, but can exhibit suboptimal performance. The MPC formulation in the current paper gives rise to a training algorithm that can exploit knowledge about the system, improving performance. 
Second, the MPC formulation permits explicit consideration for future time steps. The RL formulation cannot optimize entire trajectories due to the presence of constraints. This inability to ``look ahead'' again limits the performance of the RL algorithm. 
Finally, the previous work required a Phase I that used a linear feedback to find a strictly feasible point. A linear feedback, however, may not exist for some problems. The current work proposes a more general class of Phase I solutions (piecewise affine), while providing a way to manage the complexity of the Phase I solution.}

% is a coarse PWA function which, in our simulations, only requires one region (no partition). 
% problem is reduced to an affine function at best, and to a multiparametric linear program with substantially lower complexity and accuracy requirements compared to the original MPC problem at worst. 
% The Phase II problem is solved using a novel NN architecture that can encode arbitrary polytopic constraints, provided the Phase I solution exists and is available. T

\begin{figure}[b]
    \centering
    \includegraphics[height=3cm]{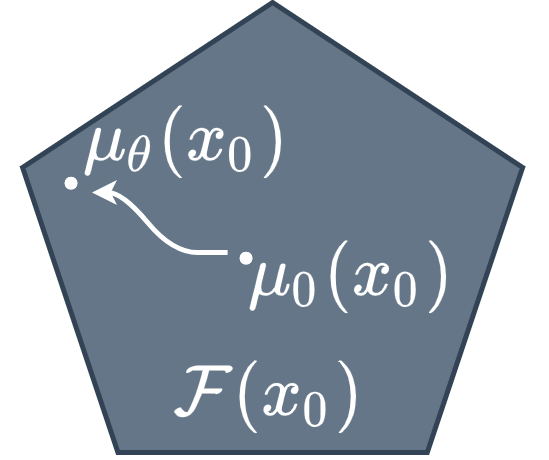}
    \caption{Illustration of the interior point approach to learning-based MPC. The set $\F(x_0)$ represents the MPC feasible set, while $\mu_0(x_0)$ and $\mu_\theta(x_0)$ are control input sequences representing solutions to the Phase I and Phase II problems, respectively. The neural network $\mu_\theta$ moves the Phase I solution to a more optimal solution.}
    \label{fig:intro}
\end{figure}

We demonstrate the effectiveness of the proposed technique on a 3-state test system, and compare to standard projection- and penalty-based approaches for learning with constraints. The results show that the proposed technique achieves Pareto efficiency in terms of closed-loop performance and online computation effort. All code is available at \texttt{github.com/dtabas/gauge\_networks}.

% The rest of the paper is organized as follows. Section \ref{sec:pf} provides the problem formulation. Sections \ref{sec:p1} and \ref{sec:p2} detail the accelerated solutions to the Phase I and Phase II MPC problems, respectively. Section \ref{sec:sims} contains descriptions of the simulations as well as test results.

\textit{Notation:}
The \revision{$p$}-norm ball for $p \geq 1$ is $\B_p = \{z \mid \|z\|_p \leq 1\}$.
%, with dimensionality inferred from context.
A \textit{polytope} $\P \subset \R^n := \{z \in \R^n \mid Fz \leq g\}$ is the (bounded) intersection of a finite number of halfspaces. \revision{Scaling of polytopes by a factor $\lambda > 0$ is defined as $\lambda \P = \{\lambda z \in \R^n \mid Fz \leq g\} = \{z \in \R^n \mid Fz \leq \lambda g\}$.} Given a matrix $F$ and a vector $g$, the $i$th row of $F$ is denoted $F^{(i)T}$ and the $i$th component of $g$ is $g^{(i)}$. The interior of any set $\Q$ is denoted $\int \Q$. The value of a variable $y$ at a time interval $t$ is denoted $y_t$. \newrevision{A state or control trajectory of length $\tau$ is written as the vector $\textbf{x} = \m{x_1^T,\ldots,x_{\tau}^T}^T \in \R^{n\tau}$ or $\textbf{u} = \m{u_0^T,\ldots,u_{\tau-1}^T}^T \in \R^{m\tau}$.} The column vector of all ones is \textbf{1}.
%, with dimensionality inferred from context. 
\newrevision{The symbol $\circ$ denotes function composition.}% \revision{The Pontryagin set difference is $\P \ominus \Q = \{z \in \R^n \mid z + \Q \subseteq \P\}$. }
\section{Problem Formulation} \label{sec:pf}

% Describe the system:

In this paper, we consider the problem of regulating discrete-time dynamical systems of the form \begin{align}
    x_{t+1} = Ax_t + Bu_t + d_t \label{eqn:2-26-5}
\end{align} where $x_t \in \R^n$ is the system state at time $t$, $u_t \in \R^m$ is the control input, and $d_t \in \R^n$ is an uncertain input that captures exogenous disturbances and/or linearization error (if the true system dynamics are nonlinear) \cite{Boyd1994}. We assume the pair $(A,B)$ is stabilizable.
%Let $\X \subset \R^n$ and $\U \subset \R^m$ be sets representing the constraints of the system. The set $\X$ represents the region of safe operation within the state space, defined by engineering constraints. The set $\U$ represents actuation limits. We assume $\X$ and $\U$ are polytopes given by $\X = \{x \in \R^n \mid F_x x \leq g_x\}$ and $\U = \{u \in \R^m \mid F_u u \leq g_u\}.$
\revision{The input constraints (actuation limits) are $\U = \{u \in \R^m \mid F_u u \leq g_u\}$ while the state constraints arising from safety-critical engineering considerations are $\X = \{x \in \R^n \mid F_x x \leq g_x\}$.}

% State the MPC problem:

We consider the problem of operating the system \eqref{eqn:2-26-5} using finite-horizon model predictive control. The goal is to choose, given initial condition $x_0 \in \X$, a sequence of inputs $\textbf{u}$ of length $\tau$ that minimizes the cost of operating the system while respecting the operational constraints.

% Once $\hat{\u}^*_t$ is chosen, the first action is implemented, i.e. we take $u_t = \hat{u}_t^*$. After a state evolution step, a new state is observed, and $\hat{\u}^*_{t+1}$ is computed given the observation of $x_{t+1}$.

However, since the disturbances $d_t$ are unknown ahead of time, the designer must carefully consider how to achieve both optimality and constraint satisfaction. 
% we cannot compute the cost \eqref{eqn:2-26-1} or guarantee satisfaction of the state constraints \eqref{eqn:2-26-3}. However,
Robust MPC literature contains many ways to handle the presence of disturbances in both the cost and constraints \cite{Bemporad1999}. For example, the \textit{certainty-equivalent}  approach \cite{Alessio2009} considers only the nominal system trajectory, while the \textit{min-max} approach \cite{Grancharova2009} considers the worst-case disturbance. Interpolating between these two extremes, the \textit{tube-based} approach \cite{Langson2004} considers the cost of a nominal trajectory while guaranteeing that the true trajectory satisfies constraints. A \textit{stochastic} point of view in \cite{Farina2016} considers the disturbance as a random variable and minimizes the expected cost while providing probabilistic guarantees for constraint satisfaction. 

\revision{In most robust MPC formulations, the set of possible disturbances is modeled as either a finite set, a bounded set, or a probability distribution \cite{Saltk2018}.} In this paper, we assume the disturbances lie in a closed and bounded set $\D := \{d \in \R^n \mid F_d d \leq g_d\}$. \revision{In order to ensure constraint satisfaction, we operate the system within a \textit{robust control invariant set} (RCI) $\S \subseteq \X$, defined as a set of initial conditions for which there exists a feedback policy in $\U$ keeping all system trajectories in $\S$, under any disturbance sequence in $\D$ \cite{Blanchini2015}. In our simulations, we used approximately-maximal RCIs computed with the semidefinite program from \cite{Liu2015}.} 

\revision{With $\S := \{x \in \R^n \mid F_s x \leq g_s\}$, we define the \textit{target set} $\T$ as $\{x \in \R^n \mid x + d \in \S, \ \forall\ d \in \D\} = \{x \in \R^n \mid F_s x \leq \tilde{g}_s\}$ where for each row $i$, $\tilde{g}_s^{(i)} = g_s^{(i)} - \max_{d \in \D} F_s^{(i)T}d$ \cite{Blanchini2015}}. \newrevision{Any policy that maps $\S$ to $\T$ under the nominal dynamics will map $\S$ to itself under the true dynamics, rendering $\S$ robustly invariant.} \revision{By constraining the nominal state to the target set, robust constraint satisfaction is guaranteed for the first time step. Since $\S$ is RCI, this is sufficient for keeping closed-loop trajectories inside $\S$. Under this formulation, the MPC problem is posed as follows, given initial state $x_0$:
\begin{subequations} \label{eqn:2-26-7} \begin{gather}
    \min_{\u} \sum_{k=0}^{\tau-1} l(x_k,u_k) + l_F(x_{\tau}) \label{eqn:2-26-8}\\
    \text{subject to $\forall\ k$: } x_{k+1} = Ax_k + Bu_k \label{eqn:2-26-9}\\
    x_{k+1} \in \T \label{eqn:2-26-10}\\
    u_k \in \U \label{eqn:2-27-9}
\end{gather} \end{subequations} where $l$ and $l_F$ are stage and terminal costs that are differentiable but possibly nonlinear or even non-convex.} \newrevision{Although \eqref{eqn:2-26-7} differs from the standard tube-based approach, the techniques introduced in this paper can be applied to a variety of MPC formulations.}

%Due to the assumption that $\X$ is an RCI, \eqref{eqn:2-26-10} is feasible for any $x_t \in \X$, giving rise to the property of \textit{recursive feasibility}. 
%This in turn implies that \eqref{eqn:2-26-7} is feasible when the optimal solution is implemented in a receding horizon fashion. 
%Since $\X$ and $\D$ are polytopes, the constraint \eqref{eqn:2-26-10} is itself a polytopic constraint, hence tractable.

\revision{In this paper, we seek to derive a safe feedback policy $\pi_\theta: \R^n \rightarrow \R^m$ that approximates the explicit solution to \eqref{eqn:2-26-7} by first approximating the optimal control sequence with a function $\mu_\theta: \R^n \rightarrow \R^{m\tau}$ and then implementing the first action of the sequence in the closed loop. In practice, any MPC policy implemented in closed loop must be stabilizing and recursively feasible. Recursive feasibility is the property that closed-loop trajectories generated by the MPC controller will not lead to states in which the MPC problem is infeasible. This property is guaranteed when $\S$ is RCI \cite{Blanchini2015}. If recursive feasibility is not guaranteed, then a backup controller must be developed or a control sequence that is feasible for the most immediate time steps can be used. There is suggestion in the literature that the latter approach performs quite well in practice~\cite{Wang2010}, but the theoretical aspects remain open. In terms of stability, recursive feasibility guarantees that trajectories will remain within a bounded set. Since this work focuses on constraint satisfaction, we do not consider stricter notions of stability.}

\section{Phase I: Finding a Feasible Point} \label{sec:p1}

\revision{The feasible set of \eqref{eqn:2-26-7} is a polytope $\F(x_0) \subseteq \R^{m\tau}$, defined by the following inequalities in $\u$: \begin{subequations} \begin{align}
        H_s(M_0 x_0 + M_u \textbf{u}) &\leq \tilde{h}_s, \label{eqn:5-22-22-4}\\
        H_u \textbf{u} &\leq h_u \label{eqn:5-22-22-5}
    \end{align} \end{subequations} where $H_s,H_u,M_0,M_u,\tilde{h}_s,$ and $h_u$ are block matrices and vectors derived from the system dynamics and constraints}.
\revision{In this paper, we assume that $\F(x_0)$ \newrevision{has nonempty interior} for all $x_0 \in \S$. Since the state constraints $\S$ form an RCI, $\F(x_0)$ is already guaranteed to be nonempty, and the assumption of nonempty interior is only marginally more restrictive.} 

\revision{The gauge map technique introduced in \cite{Tabas2021a} provides a way to constrain the outputs of a neural network $\mu_\theta: \R^n \rightarrow \R^{m\tau}$ to $\F(x_0)$ without a projection or penalty function, but $\F(x_0)$ must contain the origin in its interior. If this is not the case, then we must temporarily ``shift'' $\F(x_0)$ by subtracting any one of its interior points. In this section, we discuss several ways to reduce the complexity of finding an interior point.}

\revision{We begin by considering the feasibility problem for the one-step safe action set defined as $\V(x_0) = \{u \in \R^m \mid u \in \U, Ax_0 + Bu \in \T\},$ which is guaranteed to have an interior point by the assumption on $\F(x_0).$ \newrevision{One way to find an interior point of $\V(x_0)$ is to minimize the maximum constraint violation:}} \begin{subequations} \label{eqn:3-17-3} \begin{gather}
    \min_{u,s} s \label{eqn:8-23-22-1}\\
    \text{subject to: } F_s(Ax_0 + Bu) \leq \tilde{g}_s + s\textbf{1} \label{eqn:5-24-22-1}\\
    F_u u \leq g_u + s\textbf{1} \label{eqn:5-24-22-2}
\end{gather} \end{subequations} \revision{which has an optimal cost $s^* \leq 0$ if \newrevision{$\V(x_0)$ is nonempty}, and $s^* < 0$ if \newrevision{$\V(x_0)$ has nonempty interior} \cite{Boyd2009}. To avoid solving a linear program online during closed-loop implementation, the solution to \eqref{eqn:3-17-3} can be stored as a piecewise affine (PWA) function $\pi_0(x_0):\R^n \rightarrow \R^m$ \cite{Jones2007}. Although solutions to multiparametric LPs can be demanding on computer memory, we take advantage of the fact that feasibility problems have low accuracy requirements: any \newrevision{suboptimal} solution to \eqref{eqn:3-17-3} that achieves a cost $s < 0$ for all $x_0 \in \S$ is acceptable.
}
\newrevision{ \begin{definition}
A function $\pi_0: \R^n \rightarrow \R^m$ is said to solve \eqref{eqn:3-17-3} if,  for all $x_0 \in \S$, the optimal cost of \eqref{eqn:3-17-3} is negative when the decision variable $u$ is fixed at $\pi_0(x_0)$. 
\end{definition}}

\revision{Existing techniques for approximate multiparametric linear programming \cite{Filippi2004}, especially those that generate continuous solutions \cite{Spjotvold2005}, can be used to reduce the memory requirements of offline solutions to \eqref{eqn:3-17-3}.} 

\ifx
\newrevision{To characterize the set of allowable approximations to the optimal solution of \eqref{eqn:3-17-3}, we pose the following feasibility problem: \begin{align}
    \min_{u,s} 0
    \text{ s.t. } s < 0, \eqref{eqn:5-24-22-1}, \eqref{eqn:5-24-22-2}. \label{eqn:8-24-22-1}
\end{align} We will say that a function $\pi_0$ solves \eqref{eqn:8-24-22-1} if $\pi_0(x_0)$ solves \eqref{eqn:8-24-22-1} for all $x_0 \in \S$.}
\fi

\revision{To show just how far one can go with reducing complexity, we will construct an affine (rather than PWA)} 
%\newrevision{approximate} solution to \eqref{eqn:3-17-3} \newrevision{achieving negative cost \eqref{eqn:8-23-22-1} 
\newrevision{function that solves \eqref{eqn:3-17-3},} 
\revision{for the system studied in Section \ref{sec:sims}.} 
Let $\pi_0(x_0) = Wx_0 + w$. If $W \in \R^{m \times n}$ and $w \in \R^m$ satisfy
\begin{subequations} \label{eqn:3-17-1}
\begin{align}
    F_x(Ax_0 + B(Wx_0+w)) &< \tilde{g}_x \\
    F_u (Wx_0+w) &< g_u 
\end{align} \end{subequations} for all $x_0 \in \S$, then $\pi_0(x_0) = Wx_0 + w$ \newrevision{solves \eqref{eqn:3-17-3}}. %returns an interior point of $\V(x_0)$ for all $x_0 \in \S$. 
\newrevision{The following optimization problem} can be solved to find $W$ and $w$ or certify that none exists. Let $\Y(s) = \{x_0 \in \R^n \mid F_s(Ax_0 + B(Wx_0+w)) \leq \tilde{g}_s + s \textbf{1}, F_u (Wx_0 + w) \leq g_u + s \textbf{1}\}.$ \newrevision{If the optimal cost of} \begin{gather}
    \min_{W,w,s} s 
    \text{ subject to } \newrevision{\S} \subseteq \Y(s) \label{eqn:3-17-4}
\end{gather} \newrevision{is negative, then %$x_0 \in \Y(s^*)$ for all $x_0 \in \S$, thus
\eqref{eqn:3-17-1} holds for all $x_0 \in \S$, thus $\pi_0$ solves \eqref{eqn:3-17-3}. This happens to be the case for the example in Section \ref{sec:sims}, taken from \cite{Zeilinger2011}}. The constraint in \eqref{eqn:3-17-4} is a polytope containment constraint in halfspace representation, \newrevision{thus \eqref{eqn:3-17-4} can be solved as} a linear program \cite{Sadraddini2019}.

\revision{Now consider the feasibility problem for $\F(x_0)$, which is obtained by replacing \eqref{eqn:5-24-22-1} and \eqref{eqn:5-24-22-2} with \eqref{eqn:5-22-22-4} and \eqref{eqn:5-22-22-5}, and changing the optimization variable from $u \in \R^m$ to $\u \in \R^{m\tau}$. One would naturally expect the complexity of the PWA solution to this feasibility problem to increase rapidly with the time horizon $\tau$, as more decision variables and constraints are added. However, the next proposition shows that the cardinality of the stored partition can be made constant in $\tau$.}
\revision{\begin{proposition}[Phase I solution] \label{prop:2-27-1} If $\pi_0$ \newrevision{solves \eqref{eqn:3-17-3},} then the \newrevision{vector $\mu_0(x_0) := \m{\pi_0(x_0)^T,\ldots,\pi_0(x_{\tau-1})^T}^T$}, where $x_{k+1} = Ax_k + B \pi_0(x_k)$, is an interior point of $\F(x_0)$ for any $x_0 \in$ \newrevision{$\S$}.
\end{proposition}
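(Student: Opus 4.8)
The plan is to show that \emph{every} inequality in the halfspace description \eqref{eqn:5-22-22-4}--\eqref{eqn:5-22-22-5} of $\F(x_0)$ holds strictly at $\u=\mu_0(x_0)$. Since $\F(x_0)$ is a polytope, any point at which all of its defining inequalities are strict automatically lies in $\int\F(x_0)$ (a small ball around it still satisfies finitely many strict inequalities), so this suffices; note we only need this easy direction, not a characterization of $\int\F(x_0)$. The only preparation required is to check that the nominal trajectory $x_0,\ldots,x_{\tau-1}$ generated by $x_{k+1}=Ax_k+B\pi_0(x_k)$ never leaves $\S$, so that the hypothesis on $\pi_0$ can be invoked at each stage.

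First I would prove by induction on $k$ that $x_k\in\S$ for $k=0,\ldots,\tau-1$. The base case is the standing assumption $x_0\in\S$. For the step, suppose $x_k\in\S$ with $k\le\tau-2$. Since $\pi_0$ solves \eqref{eqn:3-17-3}, fixing $u=\pi_0(x_k)$ makes the optimal value of \eqref{eqn:3-17-3} negative, which is precisely the statement that $F_u\pi_0(x_k)<g_u$ and $F_s\big(Ax_k+B\pi_0(x_k)\big)<\tilde g_s$ componentwise; in particular $x_{k+1}=Ax_k+B\pi_0(x_k)\in\int\T$. Because $\T\subseteq\S$ (immediate from $\tilde g_s\le g_s$, which holds since $0\in\D$), we get $x_{k+1}\in\S$, closing the induction. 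Hence for every $k=0,\ldots,\tau-1$ we have $x_k\in\S$, so $F_u\pi_0(x_k)<g_u$ and $F_s x_{k+1}<\tilde g_s$, both strict.

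Second I would unwind the block structure of $\F(x_0)$. The matrix $H_u$ is block diagonal with $\tau$ copies of $F_u$ and $h_u=[g_u^T,\ldots,g_u^T]^T$, so $H_u\mu_0(x_0)$ is exactly the stack of the vectors $F_u\pi_0(x_k)$, each strictly below $g_u$; thus $H_u\mu_0(x_0)<h_u$. For the state constraint, $M_0x_0+M_u\mu_0(x_0)$ is by construction the stacked nominal trajectory $[x_1^T,\ldots,x_\tau^T]^T$ obtained from $x_0$ under the input sequence $\mu_0(x_0)$, while $H_s$ is block diagonal with $\tau$ copies of $F_s$ and $\tilde h_s=[\tilde g_s^T,\ldots,\tilde g_s^T]^T$; hence $H_s\big(M_0x_0+M_u\mu_0(x_0)\big)$ is the stack of $F_s x_{k+1}$, each strictly below $\tilde g_s$. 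Every inequality defining $\F(x_0)$ is therefore strict at $\mu_0(x_0)$, giving $\mu_0(x_0)\in\int\F(x_0)$.

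I expect the only (minor) obstacle to be bookkeeping: applying the hypothesis ``$\pi_0$ solves \eqref{eqn:3-17-3}'' only at points already certified to lie in $\S$, which is exactly what the induction secures, together with the containment $\T\subseteq\S$ that keeps the trajectory in that domain. Everything else is a direct expansion of the definitions of $\mu_0$ and of $H_s,H_u,M_0,M_u,\tilde h_s,h_u$; as a byproduct, the argument re-establishes that $\F(x_0)$ has nonempty interior for each $x_0\in\S$ whenever such a $\pi_0$ exists.
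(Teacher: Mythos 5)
Your proposal is correct and follows essentially the same route as the paper's own proof: an induction using the definition of $\V$ to keep the nominal trajectory in $\S$, followed by the observation that every defining inequality of $\F(x_0)$ then holds strictly at $\mu_0(x_0)$. The only difference is that you spell out two details the paper leaves implicit --- the containment $\T\subseteq\S$ (via $0\in\D$) needed to close the induction, and the block structure of $H_s,H_u,M_0,M_u$ --- which is a harmless elaboration rather than a different argument.
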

\begin{proof} \newrevision{If $\pi_0$ solves \eqref{eqn:3-17-3}, then $\pi_0(x) \in \int \V(x)$ for all $x \in \S$. Applying the definition of $\V$ in an inductive argument,} it is straightforward to show that the \newrevision{state trajectory associated with $\mu_0(x_0)$ is entirely contained } %policy $\pi_0$ generates trajectories
in $\S$. Fix any such trajectory $\{x_1,\ldots,x_\tau\} \subset \S$ originating from $x_0 \in \S$ under policy $\pi_0$. For any $k \in \{1,\ldots,\tau\}$, \newrevision{$x_k \in \S$ implies $\pi_0(x_k) \in \int \V(x_k)$, which} implies $\pi_0(x_k) \in \int \U$ and $Ax_k + B\pi_0(x_k) \in \int \T$. Since this holds for all $k$, the constraints defining $\F(x_0)$ hold strictly at $\mu_0(x_0)$. %We conclude $\mu_0(x_0) \in \int \F(x_0)$.
%Using induction and the definition of $\V$, it is straightforward to show that if $x_0 \in \S$ then $\pi_0$ generates a trajectory in $\S$. For any trajectory $\{x_0,\ldots,x_\tau\} \subset \S,$ the Cartesian product $\V(x_0) \times \cdots \V(x_{\tau-1})$ is nonempty and contains $\mu_0(x_0)$ in its interior. By the definition of $\V$, this product set is a subset of $\F(x_0),$ thus $\mu_0(x_0) \in \int \F(x_0).$
%For any $x_k \in \S, k \in \{0,\ldots,\tau-1\},$ we have $\pi_0(x_k) \in \int \V(x_k),$ implying $x_{k+1} := Ax_k + B \pi_0(x_k) \in \int \T \subset \S$. Since $x_0 \in \S$, we conclude from induction that $\pi_0$ generates trajectories in $\S$. Therefore, $\pi_0(x_k) \in \int \U$ and $Ax_k + B\pi_0(x_k) \in \int \T$ for all $k$. We conclude that $\mu_0(x_0) \in \int \F(x_0)$. 
\end{proof}}

%Proposition \ref{prop:2-27-1} removes dependence on the length of the time horizon $\tau$ but leaves unanswered the question of how to find an interior point of $\V(\cdot)$. Next, we show how to construct an affine or PWA function $\phi$, which returns such an interior point. Unsurprisingly, the approximation $\phi$ is of substantially lower complexity than a PWA solution to \eqref{eqn:2-26-7}, since it is often significantly easier to find a feasible control action compared to the optimal action. 

% in the sense of Proposition \ref{prop:2-27-1}.

%Let $\phi_{\text{affine}}(x) = Wx + w$ where $W \in \R^{m \times n}$ and $w \in \R^m$, and let $\Y(s) = \{x \in \R^n \mid F_x(Ax + B \phi_{\text{affine}}(x)) \leq \tilde{g}_x + s \textbf{1},\ F_u \phi_{\text{affine}}(x) \leq g_u + s\textbf{1}\}.$

%If $s^* \geq 0,$ then $\phi$ cannot be expressed as an affine function. In that case, we can increase its complexity by expressing it as a PWA function generated by solving the following multiparametric linear programming (mpLP) problem over the parameter set $x \in \X$: 

%Solving the mpLP \eqref{eqn:3-17-3} is substantially more efficient than solving \eqref{eqn:2-26-7} because \eqref{eqn:3-17-3} has a linear cost, lower accuracy requirement, and a factor of $\tau$ fewer decision variables and constraints. 
In our simulations on the example from \cite{Zeilinger2011},  $\eqref{eqn:3-17-4}$ was feasible with negative optimal cost, \newrevision{meaning that a polyhedral partition of the state space was not needed}
%i.e only one region was needed
(see Section \ref{sec:sims}). This indicates that the minimum number of regions \newrevision{in a polyhedral state space partition associated with a PWA
%to find an admissible 
solution to \eqref{eqn:3-17-3}} is in general very small relative to the number of regions in an explicit solution to \eqref{eqn:2-26-7}. %\todo{Say something about in our simulation how many pieces was needed. E.g., linear function was successful or only one piece.} 
%Any technique for approximate mpLP (e.g. \cite{Filippi2004,Spjotvold2005} and the references therein) can solve \eqref{eqn:3-17-3}.

\section{Phase II: Optimizing Performance} \label{sec:p2}

In this section, we construct a class of policies from $x_0 \in \S$ to $\F(x_0)$, that can be trained using standard machine learning packages. Although it is difficult to constrain the output of a neural network to an arbitrary polytope such as $\F(x_0)$, it is easy to constrain the output to the hypercube $\B_\infty$ by applying a clamping function elementwise in the output layer. We apply a mapping between polytopes that is closed-form, differentiable, and bijective. This mapping establishes an equivalence between $\B_\infty$ and $\F(x_0)$, allowing one to constrain the outputs of the policy to $\F(x_0)$. The mapping from $\B_\infty$ to $\F(x_0)$ is called the \textit{gauge map}. The concept is illustrated in Figure \ref{fig:nn_diagram}. 

\begin{figure}
    \centering
    \includegraphics[width=7cm]{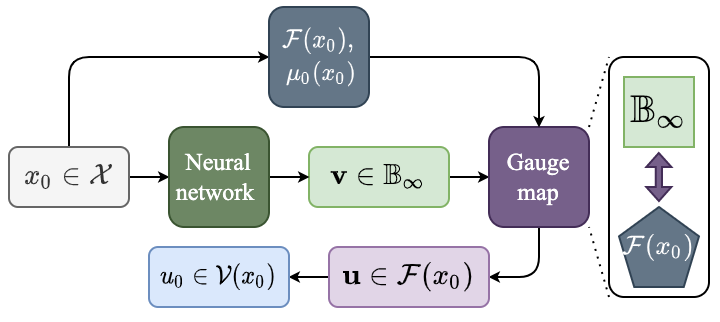}
    \caption{The proposed control policy uses a neural network combined with the Phase I solution and a \textit{gauge map} to constrain the decision $\u$ to the MPC feasible set $\F(x_0)$. The first action from the sequence $\u$ is extracted and implemented. On the right, the action of the gauge map is illustrated.}
    \label{fig:nn_diagram}
\end{figure}

We begin constructing the gauge map by introducing some preliminary concepts. A \textit{C-set} is a convex, compact set that contains the origin as an interior point. The \newrevision{\textit{gauge function} with respect to C-set $\P \subset \R^n$, denoted $\gamma_{\P}: \R^n \rightarrow \R_+$, is the function whose sublevel sets are scaled versions of $\P$. Specifically,} the gauge of a vector $v$ with respect to $\P$ is given by $\gamma_\P(v) = \inf\{\lambda \geq 0 \mid v \in \lambda \P\}.$ If $\P$ is a polytopic C-set given by $\{v \in \R^k \mid Fv \leq g\}$, then \newrevision{$\gamma_{\P}$ is the pointwise maximum over a finite set of affine functions \cite{Tabas2021a}:} \begin{align} \gamma_\P(v) = \max_i \frac{F^{(i)T}v}{g^{(i)}}. \label{eqn:5-25-22-2} \end{align} %\cite{Tabas2021a}. 
%The gauge function of a point in the $\infty$-norm ball $\B_\infty$ is the $\infty$-norm of the point, i.e. $\gamma_{\B_\infty}(\cdot) = \|\cdot\|_\infty$. 
Given two C-sets $\P$ and $\Q$, the \textit{gauge map} $G: \P \rightarrow \Q$ is %defined as 
\begin{align} G(v \mid \P,\Q) = \frac{\gamma_\P(v)}{\gamma_\Q(v)} \cdot v. \label{eqn:3-21-1} \end{align} 
\newrevision{This function maps level sets of $\gamma_{\P}$ to level sets of $\gamma_{\Q}$.}
%\revision{The gauge map from $\P$ to $\Q$ applies a scaling factor to points in $\P$ such that the image of any level set of $\gamma_{\P}$ under the gauge map is equal to the corresponding level set of $\gamma_{\Q}$. }

\begin{proposition} \label{prop:2-27-2}
Given two polytopic C-sets $\P$ and $\Q$, the gauge map $G: \P \rightarrow \Q$ is subdifferentiable and bijective. Further, given a function $\pi_0$ from Proposition \ref{prop:2-27-1}, the set $\tilde{\F}(x) := [\F(x)-\pi_0(x)]$ is a C-set for all $x \in \S$.
\end{proposition}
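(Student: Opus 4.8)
The plan is to establish the two claims of Proposition \ref{prop:2-27-2} separately: first the properties of the gauge map $G:\P\rightarrow\Q$ between arbitrary polytopic C-sets, and then the fact that the shifted feasible set $\tilde{\F}(x)$ is itself a polytopic C-set, so that the gauge map can legitimately be applied with $\Q = \tilde{\F}(x)$. The second part is what connects the abstract construction back to the MPC setting via Proposition \ref{prop:2-27-1}.

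For the gauge-map claims, I would first recall from \eqref{eqn:5-25-22-2} that for a polytopic C-set the gauge $\gamma_\P$ is a pointwise maximum of finitely many affine (in fact linear) functions, hence convex, positively homogeneous of degree one, continuous, and subdifferentiable everywhere; moreover $\gamma_\P(v) > 0$ for $v \neq 0$ because $\P$ is bounded, and $\gamma_\P(v) = 0$ iff $v = 0$. The same holds for $\gamma_\Q$. Consequently the ratio $\gamma_\P(v)/\gamma_\Q(v)$ in \eqref{eqn:3-21-1} is well-defined and positive for $v \neq 0$, and $G$ extends continuously to the origin with $G(0)=0$. Subdifferentiability of $G$ then follows from the chain/quotient rules for the composition of the smooth map $(a,b)\mapsto (a/b)v$ with the subdifferentiable maps $\gamma_\P,\gamma_\Q$ away from the origin. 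For bijectivity, the key observation is positive homogeneity: along any ray $\{tv : t\ge 0\}$, $G$ scales $v$ by the constant factor $\gamma_\P(v)/\gamma_\Q(v)$, so $G$ maps the ray through $v$ onto the ray through $v$, and it maps the level set $\{\gamma_\P = 1\} = \partial\P$ onto $\{\gamma_\Q = 1\} = \partial\Q$ (and more generally $\{\gamma_\P = c\}$ onto $\{\gamma_\Q = c\}$). One then writes down the explicit inverse $G^{-1}(w \mid \P,\Q) = G(w\mid\Q,\P) = \frac{\gamma_\Q(w)}{\gamma_\P(w)} w$ and checks by direct substitution, using homogeneity of the gauges, that $G^{-1}\circ G = \mathrm{id}$ and $G\circ G^{-1} = \mathrm{id}$; this simultaneously shows $G$ maps $\P$ onto $\Q$ and is a bijection between them.

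For the second claim, I would note that $\F(x)$ is a polytope for every $x \in \S$ (it is cut out by the finitely many linear inequalities \eqref{eqn:5-22-22-4}--\eqref{eqn:5-22-22-5}), and translating a polytope by the vector $-\pi_0(x)$ yields another polytope; in particular $\tilde{\F}(x)$ is convex and compact. It remains to show $0 \in \mathbf{int}\,\tilde{\F}(x)$, i.e.\ $\pi_0(x) \in \mathbf{int}\,\F(x)$. But this is essentially Proposition \ref{prop:2-27-1}: since $\pi_0$ solves \eqref{eqn:3-17-3}, the trajectory-lifted vector $\mu_0(x)$ lies in $\mathbf{int}\,\F(x)$ for all $x \in \S$ — here I would use $\mu_0(x)$ in place of ``$\pi_0(x)$'' as the interior point being subtracted, matching the notation of Proposition \ref{prop:2-27-1}. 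Hence $\tilde\F(x)$ contains the origin in its interior and is therefore a polytopic C-set.

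The main obstacle is the bijectivity argument: one must be careful about the behavior at the origin (where the ratio $\gamma_\P/\gamma_\Q$ is a $0/0$ indeterminate form and $G$ is defined by continuous extension), and one must verify rigorously that $G$ maps $\P$ \emph{onto} $\Q$ and not merely \emph{into} it — this is where positive homogeneity of the gauges does the real work, since it guarantees that the scaling factor is constant along rays and that $\gamma_\Q(G(v)) = \gamma_\P(v)$, so $v\in\P \iff \gamma_\P(v)\le 1 \iff \gamma_\Q(G(v))\le 1 \iff G(v)\in\Q$. The remaining bookkeeping (convexity, compactness, subdifferentiability) is routine.
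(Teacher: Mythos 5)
Your proof is correct, and for the C-set claim it follows essentially the same route as the paper: convexity and compactness of $\F(x)$ plus the interior-point fact from Proposition~\ref{prop:2-27-1} give that the translated set contains the origin in its interior. You also correctly flag and repair the notational slip in the statement --- the set being subtracted must be the lifted vector $\mu_0(x) \in \R^{m\tau}$, not $\pi_0(x) \in \R^m$ --- which is exactly what the paper's own proof silently does.

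The one substantive difference is in the first claim: the paper simply imports subdifferentiability and bijectivity of the gauge map from \cite{Tabas2021a}, whereas you reprove them from scratch via positive homogeneity of $\gamma_\P$ and $\gamma_\Q$, the identity $\gamma_\Q(G(v)) = \gamma_\P(v)$, and the explicit inverse $G(\cdot \mid \Q,\P)$. That self-contained argument is the standard one and is sound; it buys you independence from the citation at the cost of some care at the origin, where the ratio $\gamma_\P/\gamma_\Q$ is a $0/0$ form and $G$ is only defined by continuous (degree-one homogeneous) extension --- your treatment of that point is at the same level of informality as the paper's. One very minor gap: when you assert $\tilde{\F}(x)$ is compact because it is a polytope, you are implicitly using that $\F(x)$ is bounded, which follows from boundedness of $\U$ (via \eqref{eqn:5-22-22-5}) rather than from the inequalities alone; the paper makes this explicit by citing compactness of $\S$, $\U$, and $\D$.
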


\begin{proof}
The properties of subdifferentiability and bijectivity come from \cite{Tabas2021a}. For the C-set property, fix $x \in \S$. Since $\S,$ $\U,$ and $\D$ are convex and compact, so is $\F(x)$. Since $\mu_0(x)$ is an interior point of $\F(x),$ the set $\tilde{\F}(x)$ contains the origin as an interior point and is therefore a C-set.
\end{proof}

We now use the gauge map in conjunction with the Phase I solution to construct a neural network whose output is confined to $\F(x_0).$
%present a novel class of functions with superior flexibility to approximate the solution to \eqref{eqn:2-26-7}. 
%From there, the gauge map and Phase I solution will map the output to $\F(x_t)$.
Let $\psi_\theta: \S \rightarrow \B_\infty$ be a neural network parameterized by $\theta$. \revision{A safe policy is constructed by composing the gauge map $G: \B_\infty \rightarrow \tilde{\F}(x_0)$ with $\psi_\theta$, then adding $\mu_0(x_0)$ to map the solution into $\F(x_0)$:}%, and define $\mu_\theta: \S \rightarrow \F(x_0)$ as 
\begin{align}
    \mu_\theta(x_0) = G(\cdot \mid \B_\infty, \tilde{\F}(x_0)) \circ \psi_\theta(x_0)  + \mu_0(x_0). \label{eqn:3-24-1}
\end{align} \revision{Computing the gauge map online simply requires evaluating $H_s M_0 x_0$ from \eqref{eqn:5-22-22-4} as well as the operations in \eqref{eqn:5-25-22-2}.}

The function $\mu_\theta$ has several important properties for approximating the optimal solution to \eqref{eqn:2-26-7}. First, it leverages the universal function approximation properties of neural networks \cite{Hornik1989} along with the bijectivity of the gauge map (Proposition \ref{prop:2-27-2}) to explore all interior points of $\F(x_0).$ This is an advantage over projection-based methods \cite{Chen2018d} which may be biased towards the boundary of $\F(x_0)$ when the optimal solution may lie on the interior. Second, $\mu_\theta$ is evaluated in closed form, and its outputs are constrained to $\F(x_0)$ without the use of an optimization layer \cite{Maddalena2020} that may have high computational overhead. Finally, the subdifferentiability of the gauge map (Proposition \ref{prop:2-27-2}) enables selection of parameter $\theta$ using standard automatic differentiation techniques.

%The class $\Pi$ inherits the property of universal function approximation from standard feedforward neural networks with ReLU activation functions \cite{Hornik1989}, and since the range of all functions in $\Pi$ is constructed to be exactly the feasible set of \eqref{eqn:2-26-7} for any initial condition $x_t \in \X$, the class $\Pi$ can approximate any feasible policy.

%and, by Proposition \ref{prop:2-27-2}, inherits the approximation capabilities of $\psi_\theta$. 
\subsection*{Optimizing the parameter $\theta$}

Similar to the approach taken in \cite{Akesson2006}, we optimize $\theta$ by sampling $x \in \S$ and applying stochastic gradient descent. At each iteration, a new batch of initial conditions $\{x_0^j\}_{j=1}^M$ is sampled from $\S$ and the loss is computed as \begin{align}
    J(\theta) = \frac{1}{M} \sum_{j=1}^M \sum_{k=0}^{\tau-1} l(x_k^j,u_k^j) + l_F(x_{\tau}^j) \label{eqn:2-27-10}
\end{align} with the control sequences $\u^j$ given by $\mu_\theta(x_0^j)$ and state trajectories $\x^j$ generated according to the nominal dynamics. \revision{The parameters $\theta$ are updated in the direction of $\nabla_\theta J$, which is easily computed using automatic differentiation~\cite{Gune2018}.}

%where $\hat{\u}^j_t = \pi_\theta(x^j)$, $\hat{x}_{k+1}^j = A\hat{x}_k^j + B\hat{u}_k^j,$ and $\hat{x}_0^j = x^j$.
\section{Simulations} \label{sec:sims}

\subsection{Test systems}

%We simulate the proposed policy on two standard test systems used for studying parameteric MPC (see, e.g.,  \cite{Domahidi2011,Chen2018d,Zeilinger2011}). 
%The first is a double integrator system with $n = 2$ and $m = 1$. The system matrices and constraints are \begin{gather}
%    A = \m{1&1\\0&1},\ B = \m{0 \\1}, \\
%    \|x\|_\infty \leq 1,\ \|u\|_\infty \leq 1,\ \|d\|_\infty \leq 0.1 \label{eqn:2-27-7}.
%\end{gather} 

%A time horizon of $\tau=5$ is used. The interior point function $\pi_0$ is synthesized using \eqref{eqn:3-17-4}:
%\begin{gather*}
%    W = \m{-0.520 & -1.37},\ w = 0.0108.
%\end{gather*}

We simulate the proposed policy using a modified example from \cite{Zeilinger2011} with $n=3$, $m=2,$ and $\tau=5$. The system matrices, constraints, costs, and Phase I solution (found using~\eqref{eqn:3-17-4}) are given below:
\begin{gather}
    A = \m{-.5&.3&-1 \\.2&-.5&.6\\1&.6&-.6},\ B = \m{-.601&-.890\\.955&-.715\\.246&-.184}, \\
    \|x\|_\infty \leq 5,\ \|u\|_\infty \leq 1,\ \|d\|_\infty \leq 0.1 \label{eqn:2-27-8}, \\
    l(x,u) = \|x\|_2^2 + c_1\|u\|_2^2, \ l_F(x) = c_2\|x\|_2^2 \\
    W = \m{0.116 &  0.210 & -0.370 \\
       -0.320 & -0.104 & -0.122}, w = \m{-0.157 \\ -0.0533} \nonumber
\end{gather} where $c_1$ and $c_2$ are positive constants. Although quadratic costs are used in the simulations, the proposed method can work with any differentiable cost. 

\revision{We evaluate the performance of a given policy in both open- and closed-loop experiments. In the open-loop experiments, we evaluate the MPC cost \eqref{eqn:2-26-8} and compare it to the optimal cost. The fraction suboptimality is \begin{align}
    \delta = \frac{c_{nn} - c_{mpc}}{c_{mpc}} \label{eqn:5-25-22-1}
\end{align} where $c_{nn}$ is the average cost \eqref{eqn:2-26-8} incurred by the control sequence $\mu_\theta$ on a validation set $\{x_0^j\}_{j=1}^{N_{val}} \subset \S$ and $c_{mpc}$ is the optimal cost.}

\revision{In the closed-loop experiments, we evaluate the performance of a policy $\pi_\theta(x_t): \R^n \rightarrow \R^m, t \geq 0$ which is derived from $\mu_\theta(x_t)$ by taking the first action in the sequence. We simulate \eqref{eqn:2-26-5} for $T \gg \tau$ time steps.}
The trajectory cost in the closed-loop experiments is computed as 
    $\sum_{t = 0}^{T-1} l(x_t,u_t) + l_F(x_T)$ %where the stage and terminal costs are given by $
    %l(x_t,u_t) = \|x_t\|_2^2 + c_1\|u_t\|_2^2,\ l_F(x_T) = c_2 \|x_T\|_2^2
%$ with $c_1, c_2 > 0$. 
and the disturbance is modeled as an autoregressive sequence \cite{Srinath1995}, 
$d_{t+1} = \alpha d_t + (1-\alpha) \hat{d}$ %\label{eqn:3-19-1}$ 
where $\alpha \in (0,1)$ and $\hat{d}$ is drawn uniformly over $\D$.

\subsection{Benchmarks}

We compare the proposed method to two of the most common approaches for learning a solution to \eqref{eqn:2-26-7}. The first benchmark is a penalty-based approach \cite{Drgona2020} which enforces the constraints \eqref{eqn:2-26-10} and \eqref{eqn:2-27-9} by augmenting the cost \eqref{eqn:2-27-10} with a linear penalty term on constraint violations given by $\beta \cdot \max\{0,F_x x_t-\tilde{g}_x\}$ where the $\max$ is evaluated elementwise and $\beta > 0$. Since the penalty-based approach does not encode state constraints in the policy, the policy is constrained to the Cartesian product $\U^\tau = \prod_{k=0}^{\tau-1} \U$ using scaled $\tanh$ functions elementwise.

The second benchmark is a projection-based approach \cite{Chen2018d} which constrains the policy to the set $\F(x_0)$ by solving a convex quadratic program in the output layer of a neural network \cite{Agrawal2019}. % Instead of using $\tanh$ activations to clamp the output of a neural network to $\B_\infty$ and then using the gauge map to constrain the policy to $\F(x_t)$, the projection-based approach uses scaled $\tanh$ activations to clamp the neural network output $\hat{\textbf{v}}_t$ to the product set $\U^\tau$ and then projects the result onto $\F(x_t) \subseteq \U^\tau$ by solving the following projection:
The optimization layer $\textbf{v} \rightarrow \textbf{u}$ returns
\begin{gather*}
    \underset{\u}{\arg \min} \|\textbf{v} - \u\|_2^2
    \text{ subject to } \u \in \F(x_0).
\end{gather*}

Another class of approaches to learning-based MPC seeks to learn the optimal solution to \eqref{eqn:2-26-7} using regression \cite{Parisini1995, Domahidi2011,Maddalena2020}. Specifically, data-label pairs $(x_0,u_0^*)$ are generated by sampling $x_0$ from $\S$, solving \eqref{eqn:2-26-7} for each sample, and extracting $u_0^*$ from the optimal solution $\u^*$. Then, a neural network or other function approximator is trained to learn the relationship between $x_0$ and $u_0^*$. Performance and constraint satisfaction are handled e.g. by bounding the approximation error with respect to the MPC oracle. We do not compare against this type of approach since it requires a large number of trained samples, making it difficult to compare with our and the other unsupervised examples. 

\subsection{Neural network design}

\revision{The neural networks were designed with $n$ inputs, $m\tau$ outputs, and two hidden layers with rectified linear unit (ReLU) activation functions. The width of the networks was chosen during hyperparameter tuning. In particular, we performed 30 iterations of random search over the width of the network (number of neurons per hidden layer) $\in \{64,\ldots,1024\}$, the batch size (number of initial conditions, $M$) $\in \{100,\ldots,3000\}$ and the learning rate (LR, the step size for gradient descent) $\in [10^{-5},10^{-3}]$. For each set of hyperparameters under consideration, we computed the validation score using \eqref{eqn:5-25-22-1} with $N_{val}=100$. The hyperparameters after tuning are reported in Table \ref{table:hparams}.}

\begin{table}
\centering
\caption{Hyperparameters for the three neural networks.}
\begin{tabular}{ |c|c|c|c| } 
 \hline
 Type & Width & LR & $M$\\ 
 \hline
 Gauge & $859$ & $4.7 \times 10^{-4}$ & $1655$ \\%& 0.0070 & .0015\\ 
 Penalty & $318$ & $8.7 \times 10^{-4}$ & 133 \\%&0.0083 \\ 
 Projection & $956$ & $9.0 \times 10^{-5}$ & $813$ \\%& .010 & .024\\
 \hline
\end{tabular}
\label{table:hparams}
\end{table}

\subsection{Simulation results}
Here we compare our proposed approach (Gauge NN), the penalty-based approach (Penalty NN), the projection-based approach (Projection NN) and the ``ground truth'' obtained by solving  \eqref{eqn:2-26-7} online in \texttt{cvxpy}. \revision{The results of the open-loop experiments are shown in Table \ref{table:open_loop}, with performance computed relative to the optimal MPC solution using \eqref{eqn:5-25-22-1} with $N_{val}= 100$ trials. The proposed Gauge NN achieves lower cost compared to the projection-based method, and has a much lower computational complexity (solve time is only 6\% of projection).  Table \ref{table:open_loop} only compares the NNs with safety guarantees because constraint violations are not accounted for in \eqref{eqn:5-25-22-1}.} 

\begin{table}
\centering
\caption{Open-loop test results.}
\begin{tabular}{ |c|c|c| } 
 \hline
 Type & $\delta$ \eqref{eqn:5-25-22-1} & Solve time (sec)\\ 
 \hline
 Gauge & 0.007 & .0015\\ 
 Projection & 0.010 & .024\\
 \hline
\end{tabular}
\label{table:open_loop}
\end{table}

Figure \ref{fig:train} shows the training curves for each type of network. The lower training cost achieved by the Gauge NN illustrates that it can be more efficient to explore the interior of the feasible set than the boundary. \revision{Since the MPC cost in the simulations is strictly convex, solutions with lower cost are closer to the optimal solution.}
%Thus, a lower training cost directly corresponds to a shorter distance between $\mu_\theta$ and the optimal solution.} %The large initial costs incurred by the Penalty NN are due to constraint violations, which the network learns to avoid (albeit without guarantees). 

%\begin{figure}[b]
%    \centering
%    \includegraphics[height=4cm]{Figures/loss_2n1m_False_copy.png}
%    \hfill
%    \includegraphics[height=4cm]{Figures/loss_3n2m_False_copy.png}
%    \caption{Training trajectories for the 2-state system (left) and 3-state system (right). Our proposed Gauge-based approach achieves lower cost at a much faster rate.}
%    \label{fig:train}
%\end{figure}

\begin{figure}
    \centering
    \includegraphics[width=8cm]{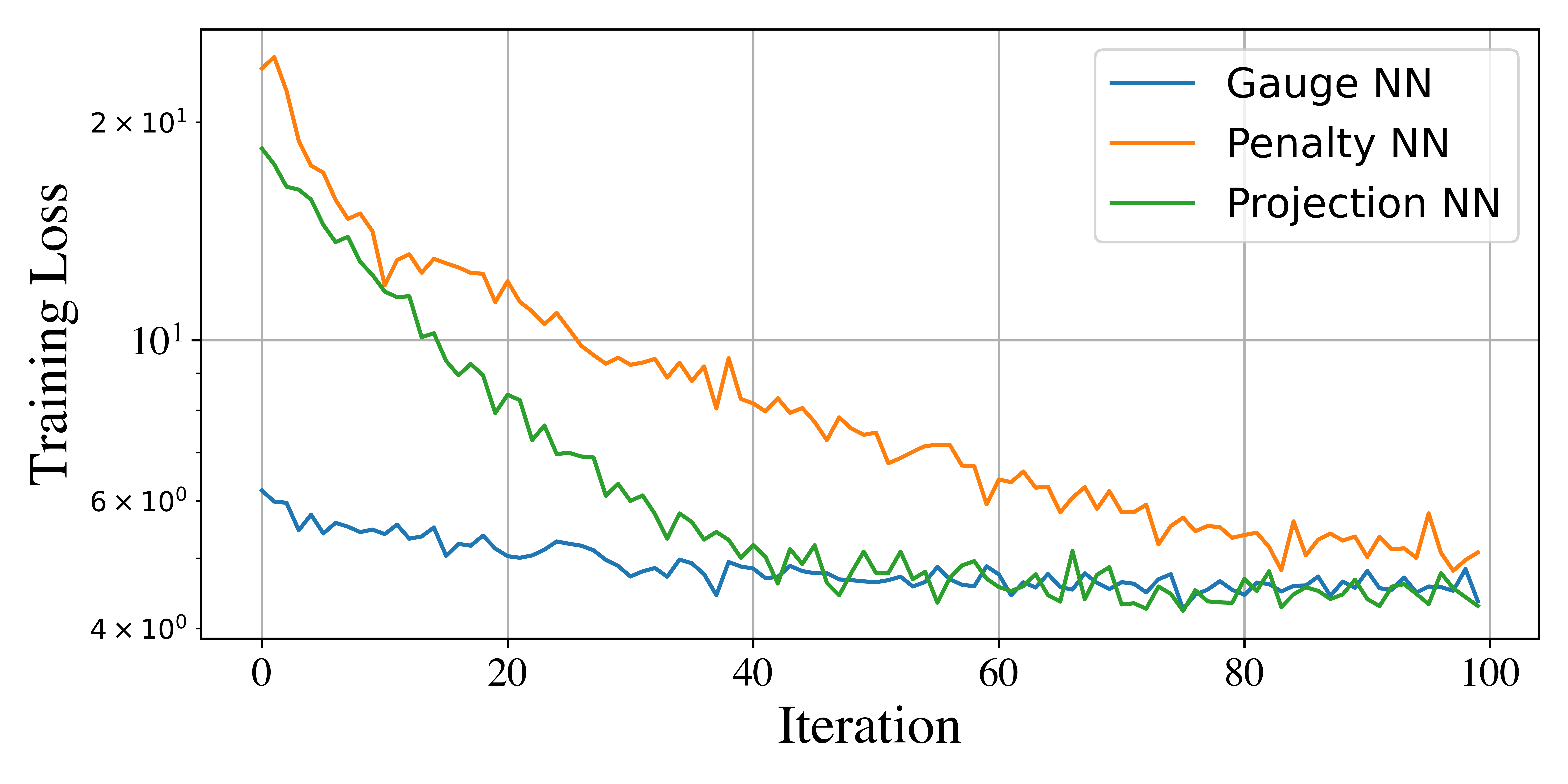}
    \caption{Training trajectories for the three types of neural netwokrs. Our proposed Gauge-based approach achieves lower cost at a much faster rate.}
    \label{fig:train}
\end{figure}

Figure \ref{fig:pareto_3} compares the policies in terms of computation time and test performance. The box-and-whisker plots indicate the range of performance over 100 test trajectories of length $T=50$, while the vertical position of each box indicates the average time to compute a control action. Of the policies with safety guarantees (Gauge NN, Projection NN, and online MPC), \revision{the Gauge NN achieves Pareto efficiency in terms of average solve time and median trajectory cost.} 
%Interestingly, the Gauge NN can achieve a lower median cost than the online MPC. 
Our intuition behind the high performance of the neural networks is that \eqref{eqn:2-26-7} is a heuristic and the unsupervised learning approach can lead to better closed-loop policies.
%there is no reason why solving an MPC problem on horizon $\tau$ should serve as a lower bound on trajectory cost when testing on the longer horizon $T$ with disturbances. With the right architecture and training algorithm, NNs can learn better policies. 

\begin{figure}
    \centering
    \includegraphics[width=8cm]{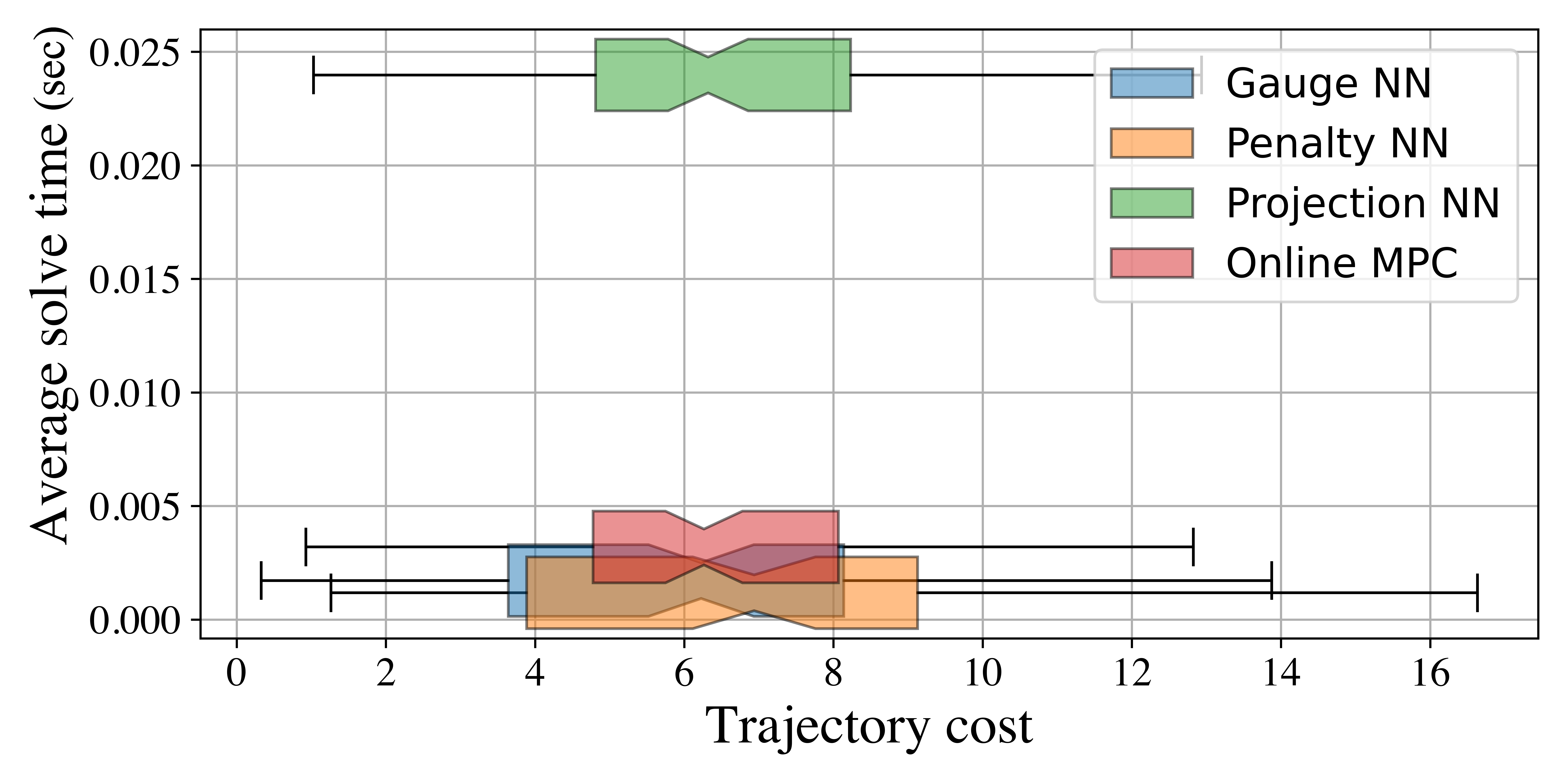}
    \caption{Solve time vs. trajectory cost for the networks under consideration applied to the 3-state system. The Gauge NN is Pareto-efficient in terms of cost and computation time compared to the other techniques with safety guarantees (Online MPC and Projection NN).}
    \label{fig:pareto_3}
\end{figure}

%\todo{Add a figure showing solve time vs. horizon length for NN and MPC and a table of solve times for $\tau=5$. Alternatively, I could try to put the box plot y-axis on a log scale. Thoughts?}

\ifx

\subsection{Extension to scenario-based MPC}
A drawback to both standard online MPC and neural network based methods is that they minimize the cost of a nominal trajectory, where the disturbance is not explicitly taken into account. 
% A possible explanation for the differences in performance shown in Figures \ref{fig:pareto_2} and \ref{fig:pareto_3} is that both the online MPC and the neural networks seek to minimize the cost of a nominal trajectory of length $\tau$, but are tested on a length-$T$ trajectory with disturbances. 
One way to address the issue of disturbances is via the \emph{scenario approach} \cite{Lucia2018}, in which the cost \eqref{eqn:2-26-8} is averaged over $N$ sequences of disturbances sampled from $\D$ according to \eqref{eqn:3-19-1}. The MPC \eqref{eqn:2-26-7} is replaced by \begin{subequations}
\begin{gather}
    \min_{\hat{\u}_t} \frac{1}{N} \sum_{i=1}^N \sum_{k=t}^{t+\tau-1} l(\hat{x}_k^i,\hat{u}_k) + l_F(\hat{x}_{t+\tau}^i) \\
    \text{s.t. }
    \hat{x}_{k+1}^i = A\hat{x}_k^i + B\hat{u}_k + d_k^i \ \forall\ i,k;\ \hat{x}_t^i = x_t  \label{eqn:3-20-1}\\
    \tilde{x}_{k+1} = A\tilde{x}_k + B\hat{u}_k \ \forall\ k;\ \tilde{x}_t = x_t  \label{eqn:3-21-2}\\
    \tilde{x}_{k+1} \in \T,\ \hat{u}_k \in \U \ \forall\ i,k. \label{eqn:3-21-3}
\end{gather} \end{subequations} For feasibility under any disturbance sequence, \eqref{eqn:3-21-3} constrains the nominal trajectory in \eqref{eqn:3-21-2} to the target set.
For the NN, the policy class \eqref{eqn:3-24-1} remains unchanged but the loss function \eqref{eqn:2-27-10} is averaged over scenarios, becoming \begin{align}
    J(\theta) = \frac{1}{MN} \sum_{i=1}^N \sum_{j=1}^M \sum_{k=0}^{\tau-1} l(\hat{x}_k^{i,j},\hat{u}_k^{j}) + l_F(\hat{x}_{\tau}^{i,j})
\end{align} with the controls $\hat{\u}_0^{j}$ given by $\pi_\theta(x_0^j)$ and trajectories by \eqref{eqn:3-20-1} for each scenario $i$ and each initial condition $j$.

Figure \ref{fig:pareto_s} shows that when noise is considered using the scenario approach, the MPC computation times increase sharply while the Gauge NN remains competitive in terms of median trajectory cost. The efficiency gained by taking an interior point perspective is multiplied when the designer wishes to account for disturbances using a scenario approach.

\begin{figure}
    \centering
    \includegraphics[width=8cm]{Figures/pareto_3n2m_True.png}
    \caption{Solve time vs. trajectory cost in the scenario-based setting, for the 3-state system. The computation time for online MPC increases dramatically, while Gauge NN achieves both lowest computation time and trajectory cost.}
    \label{fig:pareto_s}
\end{figure}

\fi
\section{Conclusion}

In this paper, we provided an efficient way of exploring the interior of the MPC feasible set for learning-based approximate explicit MPC, and demonstrated the performance and computational gains that can be achieved by approaching the problem from the interior. The paradigm relies on a Phase I solution that exploits the structure of the MPC problem and a Phase II solution that features a projection-free feasibility guarantee. The results compare favorably against common approaches that use unsupervised learning, as well as against the oracle itself used in supervised approaches. Future work includes applications to MPC problems with convex but non-polytopic constraint sets, and to distributed settings. 

\bibliography{references2}

\begin{thebibliography}{10}

\bibitem{Rawlings2019}
J.~B. Rawlings, D.~Q. Mayne, and M.~M. Diehl, {\em {Model Predictive Control:
  Theory and Design}}.
\newblock Santa Barbara, CA: Nob Hill, 2~ed., 2019.

\bibitem{Ding2018}
Y.~Ding, L.~Wang, Y.~Li, and D.~Li, ``{Model predictive control and its
  application in agriculture: A review},'' {\em Computers and Electronics in
  Agriculture}, vol.~151, pp.~104--117, 2018.

\bibitem{Hrovat2012}
D.~Hrovat, S.~Di~Cairano, H.~E. Tseng, and I.~V. Kolmanovsky, ``{The
  development of Model Predictive Control in automotive industry: A survey},''
  {\em Proc. IEEE Int. Conf. on Control Applications}, pp.~295--302, 2012.

\bibitem{Ademola-Idowu2021}
A.~Ademola-Idowu and B.~Zhang, ``{Frequency Stability Using MPC-Based Inverter
  Power Control in Low-Inertia Power Systems},'' {\em IEEE Trans. Power Syst.},
  vol.~36, no.~2, pp.~1628--1637, 2021.

\bibitem{Alessio2009}
A.~Alessio and A.~Bemporad, ``{A Survey on Explicit Model Predictive
  Control},'' in {\em Nonlinear Model Predictive Control} (L.~Magni,
  D.~Raimondo, and F.~Allg{\"{o}}wer, eds.), Berlin, Heidelberg: Springer,
  2009.

\bibitem{Zeilinger2011}
M.~N. Zeilinger, C.~N. Jones, and M.~Morari, ``{Real-time suboptimal model
  predictive control using a combination of explicit MPC and online
  optimization},'' {\em IEEE Trans. Autom. Control}, vol.~56, no.~7,
  pp.~1524--1534, 2011.

\bibitem{Jones2007}
C.~N. Jones, M.~Bari{\'{c}}, and M.~Morari, ``{Multiparametric linear
  programming with applications to control},'' {\em European Journal of
  Control}, vol.~13, no.~2-3, pp.~152--170, 2007.

\bibitem{Johansen2004}
T.~A. Johansen, ``{Approximate explicit receding horizon control of constrained
  nonlinear systems},'' {\em Automatica}, vol.~40, no.~2, pp.~293--300, 2004.

\bibitem{Grancharova2009}
A.~Grancharova and T.~A. Johansen, ``{Computation, approximation and stability
  of explicit feedback min-max nonlinear model predictive control},'' {\em
  Automatica}, vol.~45, no.~5, pp.~1134--1143, 2009.

\bibitem{Akesson2006}
B.~M. {\AA}kesson and H.~T. Toivonen, ``{A neural network model predictive
  controller},'' {\em J. Process Control}, vol.~16, no.~9, pp.~937--946, 2006.

\bibitem{Chen2018d}
S.~Chen, K.~Saulnier, N.~Atanasov, D.~D. Lee, V.~Kumar, G.~J. Pappas, and
  M.~Morari, ``{Approximating Explicit Model Predictive Control Using
  Constrained Neural Networks},'' {\em Proc. Am. Control Conf.},
  vol.~2018-June, pp.~1520--1527, 2018.

\bibitem{Parisini1995}
T.~Parisini and R.~Zoppoli, ``{A Receding-Horizon Regulator for Nonlinear
  Systems and a Neural Approximation},'' {\em Automatica}, vol.~31, no.~10,
  pp.~1443--1451, 1995.

\bibitem{Domahidi2011}
A.~Domahidi, M.~N. Zeilinger, M.~Morari, and C.~N. Jones, ``{Learning a
  feasible and stabilizing explicit model predictive control law by robust
  optimization},'' in {\em Proc. IEEE Conf. Decision Control}, pp.~513--519,
  IEEE, 2011.

\bibitem{Maddalena2020}
E.~T. Maddalena, C.~G. da~Moraes, G.~Waltrich, and C.~N. Jones, ``{A neural
  network architecture to learn explicit MPC controllers from data},'' {\em
  IFAC-PapersOnLine}, vol.~53, no.~2, pp.~11362--11367, 2020.

\bibitem{Zheng2020}
L.~Zheng, Y.~Shi, L.~J. Ratliff, and B.~Zhang, ``Safe reinforcement learning of
  control-affine systems with vertex networks,'' in {\em Learning for Dynamics
  and Control}, pp.~336--347, PMLR, 2021.

\bibitem{Boyd2009}
S.~Boyd and L.~Vandenberghe, {\em {Convex Optimization}}.
\newblock Cambridge University Press, 2009.

\bibitem{Tabas2021a}
D.~Tabas and B.~Zhang, ``{Computationally Efficient Safe Reinforcement Learning
  for Power Systems},'' {\em arXiv:2110.10333}, 2021.

\bibitem{Boyd1994}
S.~Boyd, L.~El~Ghaoui, E.~Feron, and V.~Balakrishnan, {\em {Linear Matrix
  Inequalities in System and Control Theory}}, vol.~15.
\newblock Philadelphia: Society for Industrial and Applied Mathematics, 1994.

\bibitem{Bemporad1999}
A.~Bemporad and M.~Morari, ``{Robust model predictive control: A survey},'' in
  {\em Robustness in identification and control}, pp.~207--226, London:
  Springer, 1999.

\bibitem{Langson2004}
W.~Langson, I.~Chryssochoos, S.~V. Rakovi{\'{c}}, and D.~Q. Mayne, ``{Robust
  model predictive control using tubes},'' {\em Automatica}, vol.~40, no.~1,
  pp.~125--133, 2004.

\bibitem{Farina2016}
M.~Farina, L.~Giulioni, and R.~Scattolini, ``{Stochastic linear Model
  Predictive Control with chance constraints - A review},'' {\em J. Process
  Control}, vol.~44, pp.~53--67, 2016.

\bibitem{Saltk2018}
M.~B. Saltık, L.~{\"{O}}zkan, J.~H. Ludlage, S.~Weiland, and P.~M. Van~den
  Hof, ``{An outlook on robust model predictive control algorithms: Reflections
  on performance and computational aspects},'' {\em J. Process Control},
  vol.~61, pp.~77--102, 2018.

\bibitem{Blanchini2015}
F.~Blanchini and S.~Miani, {\em {Set-theoretic methods in control}}.
\newblock Birkhauser, 2015.

\bibitem{Liu2015}
C.~Liu and I.~M. Jaimoukha, ``{The computation of full-complexity polytopic
  robust control invariant sets},'' in {\em Proc. 54th IEEE Conf. Decision
  Control}, pp.~6233--6238, 2015.

\bibitem{Wang2010}
Y.~Wang and S.~Boyd, ``{Fast model predictive control using online
  optimization},'' {\em IEEE Trans. Control Syst. Technol.}, vol.~18, no.~2,
  pp.~267--278, 2010.

\bibitem{Filippi2004}
C.~Filippi, ``{An algorithm for approximate multiparametric linear
  programming},'' {\em Journal of Optimization Theory and Applications},
  vol.~120, no.~1, pp.~73--95, 2004.

\bibitem{Spjotvold2005}
J.~Spj{\o}tvold, P.~T{\o}ndel, and T.~A. Johansen, ``{A method for obtaining
  continuous solutions to multiparametric linear programs},'' {\em IFAC Proc.
  Volumes (IFAC-PapersOnline)}, vol.~38, no.~1, pp.~253--258, 2005.

\bibitem{Sadraddini2019}
S.~Sadraddini and R.~Tedrake, ``{Linear Encodings for Polytope Containment
  Problems},'' {\em Proc. IEEE Conf. Decision Control}, pp.~4367--4372, 2019.

\bibitem{Hornik1989}
K.~Hornik, M.~Stinchcombe, and H.~White, ``{Multilayer feedforward networks are
  universal approximators},'' {\em Neural Networks}, vol.~2, no.~5,
  pp.~359--366, 1989.

\bibitem{Gune2018}
A.~Baydin, A.~A. Radul, B.~A. Pearlmutter, and J.~M. Siskind, ``{Automatic
  Differentiation in Machine Learning: a Survey},'' {\em J. Machine Learning
  Research}, vol.~18, pp.~1--43, 2018.

\bibitem{Srinath1995}
M.~D. Srinath, P.~Rajasekaran, and R.~Viswanathan, {\em {Introduction to
  statistical signal processing with applications}}.
\newblock Prentice-Hall, Inc., 1995.

\bibitem{Drgona2020}
J.~Drgona, K.~Kis, A.~Tuor, D.~Vrabie, and M.~Klauco, ``{Differentiable
  Predictive Control: Deep Learning Alternative to Explicit Model Predictive
  Control for Unknown Nonlinear Systems},'' {\em arXiv:2011.03699}, 2020.

\bibitem{Agrawal2019}
A.~Agrawal, B.~Amos, S.~Barratt, S.~Boyd, S.~Diamond, and J.~Zico~Kolter,
  ``{Differentiable convex optimization layers},'' {\em Advances in Neural
  Information Processing Systems}, vol.~32, no.~NeurIPS, 2019.

\end{thebibliography}
\bibliographystyle{ieeetr}

\end{document}